\newcommand{\R}{{\mathbb R}}
\newcommand{\N}{{\mathbb N}}
\newcommand{\Rnn}{{\mathbb R}_{\ge 0}}
\newcommand{\Rp}{{\mathbb R}_{> 0}}
\newcommand{\C}{{\mathbb C}}
\newcommand{\cA}{{\mathcal A}}
\newcommand{\cB}{{\mathcal B}}
\newcommand{\cD}{{\mathcal D}}
\newcommand{\cK}{{\mathcal K}}
\newcommand{\cS}{{\mathcal S}}
\newcommand{\cT}{{\mathcal T}}
\newcommand{\cU}{{\mathcal U}}
\newcommand{\diag}{{\mathrm{diag}}}
\newcommand{\cl}{\mathrm{cl}}
\newcommand{\inter}{\mathrm{int}}
\def\QED{\mbox{\rule[0pt]{1.3ex}{1.3ex}}} 
\newenvironment{proof}{{\quad \it Proof:\,}}{\hfill \QED \par}
\newenvironment{proof-of}[1]{{\quad\it Proof of #1:\,}}{\hfill\QED\par}
\begin{document}

\begin{frontmatter}
\title{Shaping Pulses to Control Bistable Monotone Systems Using Koopman Operator}
\author[First]{Aivar Sootla} \and \author[Second]{Alexandre Mauroy} \and \author[Third]{Jorge Goncalves}

\address[First]{Montefiore Institute, University of Li\`{e}ge, B-4000 Li\`{e}ge, Belgium (e-mail: asootla@ulg.ac.be).}
\address[Second]{Luxembourg Centre for Systems Biomedicine, University of Luxembourg, L-4367 Belvaux, Luxembourg (e-mail: alexandre.mauroy@uni.lu)}
\address[Third]{Luxembourg Centre for Systems Biomedicine, University of Luxembourg, L-4367 Belvaux, Luxembourg (e-mail: jorge.goncalves@uni.lu)}

\thanks[footnoteinfo]{A. Sootla holds an F.R.S--FNRS fellowship. This work was performed when A. Mauroy was with the University of Li\`{e}ge and held a return grant from the Belgian Science Policy (BELSPO).}
\begin{abstract}
In this paper, we further develop a recently proposed control method to switch a bistable system between its steady states using temporal pulses. The motivation for using pulses comes from biomedical and biological applications (e.g. synthetic biology), where it is generally difficult to build feedback control systems due to technical limitations in sensing and actuation. The original framework was derived for monotone systems and all the extensions relied on monotone systems theory. In contrast, we introduce the concept of switching function which is related to eigenfunctions of the so-called Koopman operator subject to a fixed control pulse. Using the level sets of the switching function we can 
(i) compute the set of all pulses that drive the system toward the steady state in a synchronous way and (ii) estimate the time needed by the flow to reach an epsilon neighborhood of the target steady state. Additionally, we show that for monotone systems the switching function is also monotone in some sense, a property that can yield efficient algorithms to compute it. This observation recovers and further extends the results of the original framework, which we illustrate on numerical examples inspired by biological applications.
\end{abstract}
\end{frontmatter}

\section{Introduction}
In many applications, the use of a time-varying feedback control signal is impeded by the limitations in sensing and actuation. One of such applications is synthetic biology, which aims to engineer and control biological functions in living cells~(\cite{brophy2014principles}), and which is an emerging field of science with applications in metabolic engineering, bioremediation and energy sector~(\cite{Purnick:2009}). Recently, control theoretic regulation of protein levels in microbes was shown to be possible by~\cite{milias2011silico, Menolascina:2011, uhlendorf2012long}. However, the proposed methods are hard to automate due to physical constraints in sensing and actuation (for example, using the techniques from~\cite{Levskaya09,Mettetal08}). In the context of actuation, adding a chemical solution to the culture is fairly straightforward, but in contrast, removing a chemical from the culture is much more complicated (this could be done through diluting, but would be time consuming and hard to perform repeatedly). In regard to these constraints, it is therefore desirable to derive control policies which can not only solve a problem (perhaps not optimally) but are also simple enough to be implemented in an experimental setting.

The pioneering development in synthetic biology was the design of the so-called \emph{genetic toggle switch} by~\cite{Gardner00}, which is a synthetic genetic system (or a circuit) of two mutually repressive genes \emph{LacI} and \emph{TetR}. Mutual repression means that only one of the genes can be activated or switched ``on'' at a time. The activated gene expresses proteins within a cell, hence the number of proteins expressed by the ``on'' gene is much higher than the number of proteins of the ``off'' gene. This entails the possibility of modeling this genetic circuit by a bistable dynamical system. Since toggle switches serve as one of the major building blocks in synthetic biology, we set up the control problem of switching from one stable fixed point to another (or toggling a gene). Recently, \cite{sootla2015pulsesacc,sootla2015pulsesaut} proposed to solve the problem using temporal pulses $u(t)$ with fixed length $\tau$ and magnitude $\mu$: 
\begin{equation}\label{eq:pulse}
u(t) = \mu h(t,\tau) \qquad 
h(t,\tau) = 
\begin{cases} 1 & 0 \leq t \leq \tau\,,\\
0 & t >\tau\,.
\end{cases}
\end{equation}
In the case of monotone systems (cf.~\cite{angeli2003monotone}), the set of all pairs $(\mu,\tau)$ allowing a switch (i.e. \emph{the switching set}) was completely characterized. In particular, the boundary of this set, called \emph{the switching separatrix}, was shown to be monotone, a result which significantly simplifies the computation of the switching set. However, the contributions of~\cite{sootla2015pulsesaut} provide only a binary answer (on whether a given control pulse switches the system or not), but do not characterize the time needed to converge to the steady state. 

In this paper, we conduct a theoretical study which extends the results by~\cite{sootla2015pulsesaut} and provides a temporal characterization of the effects of switching pulses. To do so, we exploit the framework of the so-called Koopman operator (cf.~\cite{mezic2005}), which is a linear infinite dimensional representation of a nonlinear dynamical system. In particular, we use the spectral properties of the operator, focusing on the Koopman eigenfunctions (i.e. infinite dimensional eigenvectors of the operator). We first introduce the \emph{switching function}, which we define as a function of $\mu$ and $\tau$ related to the dominant Koopman eigenfunction. Each level set of the switching function characterizes a set of pairs $(\mu, \tau)$ describing control pulses that drive the system synchronously to the target fixed point. Hence, the switching function provides a temporal characterization of the controlled trajectories. The switching separatrix introduced by~\cite{sootla2015pulsesaut} is interpreted in this framework as a particular level set of the switching function. Furthermore, there is a direct relationship between the level sets of the switching function and the so-called isostables introduced in~\cite{mauroy2013isostables}.

Since the switching function is defined through a Koopman eigenfunction, it can be computed in the Koopman operator framework with numerical methods based on Laplace averages. These methods can be applied to a very general class of systems, but usually require extensive simulations. However, we show that the switching function of monotone systems is also monotone in some sense, so that its level sets can be computed in a very efficient manner by using the algorithm proposed in~\cite{sootla2016basins}. The key to reducing the computational complexity is to exploit the properties of the Koopman eigenfunctions of a monotone system.

The main contribution of this paper is to provide a theoretical framework that relates the Koopman operator to control problems. We note, however, that the eigenfunctions of the Koopman operator can be estimated directly from the observed data using dynamic mode decomposition methods (cf.~\cite{Schmid2010,Tu2014}). Therefore our results could potentially be extended to a data-based setting, which would increase their applicability.


The rest of the paper is organized as follows. In Section~\ref{s:prel}, we cover some basics of monotone systems theory and Koopman operator theory. In Section~\ref{s:shaping-pulses}, we review the shaping pulses framework from~\cite{sootla2015pulsesaut} and present the main results of this paper. We illustrate the theoretical results on examples in Section~\ref{s:example}.

\section{Preliminaries}
\label{s:prel}
Consider control systems in the following form
\begin{equation}
\label{sys:f}
\dot x = f(x,u),\quad x(0) = x_0,
\end{equation} 
with $f: \cD\times \cU\rightarrow \R^n$, $u:\R_{\ge 0}\rightarrow \cU$, and where $\cD\subset\R^n$, $\cU\subset\R$ and $u$ belongs to the space $\cU_{\infty}$ of Lebesgue measurable functions with values from $\cU$. We define the flow map $\phi: \R \times \cD \times \cU_{\infty}\rightarrow \R^n$, where $\phi(t, x_0, u)$ is a solution to the system~\eqref{sys:f} with an initial condition $x_0$ and a control signal $u$. If $u=0$, then we call the system~\eqref{sys:f} \emph{unforced}. We denote the Jacobian matrix of $f(x,0)$ as $J(x)$. If $x^\ast$ is a fixed point of the unforced system, we assume that the eigenvectors of $J(x^\ast)$ are linearly independent, for the sake of simplicity. We denote the eigenvalues of $J(x^\ast)$ by $\lambda_i$.

\emph{Koopman Operator.} Spectral properties of nonlinear dynamical systems can be described through an operator-theoretic framework that relies on the so-called Koopman operator $L = f^T \nabla$, which is an operator acting on the functions $g:\R^n\rightarrow \C$ (also called observables). We limit our study of the Koopman operator to unforced systems~\eqref{sys:f} (that is, with $u=0$) on a basin of attraction $\cB\subset\R^n$ of a stable hyperbolic fixed point $x^\ast$ (that is, the eigenvalues $\lambda_j$ of the Jacobian matrix $J(x^\ast)$ are such that $\Re(\lambda_j) < 0$ for all $j$). In this case, the Koopman operator admits a point spectrum and the eigenvalues $\lambda_j$ of the Jacobian matrix $J(x^\ast)$ are also eigenvalues of the Koopman operator. In the non-hyperbolic case, the analysis is  more involved since the spectrum of the Koopman operator may be continuous. The operator $L$ generates a semigroup acting on observables $g$
\begin{gather}
U^t g(x) = g \circ \phi(t,x,0),
\end{gather}
where $\circ$ is the composition of functions and $\phi(t,x,0)$ is a solution to the unforced system for $x\in\cB$. Since the operator is linear (cf.~\cite{mezic2013analysis}), it is natural to study its spectral properties. In particular, the eigenfunctions $s_j: \cB \mapsto \C$ of the Koopman operator are defined as the functions satisfying $L s_j = f^T \nabla s_j=\lambda_j s_j$, or equivalently
\begin{gather}
\label{eq:property_eigenf}
U^t s_j(x) = s_j(\phi(t, x,0)) = s_j(x) \, e^{\lambda_j t}, \quad x\in \cB,
\end{gather}
where $\lambda_j\in \mathbb{C}$ is the associated eigenvalue.

If the vector field $f$ is a $C^2$ function, then the eigenfunctions $s_j$ are $C^1$ functions~(\cite{mauroy2014global}). If the vector field $f$ is analytic and if the eigenvalues $\lambda_j$ are simple, the flow of the system can be expressed through the following expansion (see e.g.~\cite{mauroy2013isostables}):
\begin{align}
&\phi(t, x,0) = x^\ast + \sum\limits_{j=1}^n s_j(x) v_j e^{\lambda_j t} + \label{eq:expansion}\\
&\notag\sum\limits_{\begin{smallmatrix}
k_1,\dots,k_n\in\N_0\\
k_1+\dots+k_n>1 
\end{smallmatrix}} v_{k_1,\dots,k_n} \, s_1^{k_1}(x) \cdots s_n^{k_n}(x) e^{(k_1\lambda_1 +\dots k_n\lambda_n) t}, 
\end{align}
where $\N_0$ is the set of nonnegative integers, $\lambda_j$, $v_j$ are the eigenvalues and right eigenvectors of the Jacobian matrix $J(x^\ast)$, respectively, and the vectors $v_{k_1,\dots,k_n}$ are the Koopman modes (see~\cite{mezic2005, mauroy2014global} for more details). Note that a similar expansion can also be obtained if the eigenvalues are not simple (cf.~\cite{mezic2015applications}).

Throughout the paper we assume that $\lambda_j$ are such that $0>\Re(\lambda_1) > \Re(\lambda_j)$ for all $j\ge 2$. In this case, the eigenfunction $s_1$, which we call a dominant eigenfunction, can be computed through the so-called Laplace average
\begin{gather}\label{laplace-average}
g_\lambda^\ast(x) = \lim\limits_{t\rightarrow \infty}\frac{1}{T}\int\limits_0^T (g\circ \phi(t, x,0)) e^{-\lambda t} d t.
\end{gather}
For all $g$ that satisfy $g(x^\ast)=0$ and $\nabla g(x^\ast) \cdot v_1 \neq 0$, the Laplace average $g_{\lambda_1}^\ast$ is equal to $s_1(x)$ up to a multiplication with a scalar. If we let $g(x) = w_1^T (x-x^\ast)$, where $w_1$ is the left eigenvector of $J(x^\ast)$ corresponding to $\lambda_1$, the limit in~\eqref{laplace-average} does not converge if $x\not\in\cB$. Therefore, we do not require the knowledge of $\cB$ in order to compute $s_1$. The other eigenfunctions $s_j(x)$ are generally harder to compute using Laplace averages. The eigenfunction $s_1$ can also be estimated with linear algebraic methods (cf.~\cite{mauroy2014global}), or obtained directly from data by using dynamic mode decomposition methods (cf.~\cite{Schmid2010,Tu2014}).

The eigenfunction $s_1(x)$ captures the dominant (i.e. asymptotic) behavior of the unforced system. Hence the boundaries $\partial \cB^\alpha$ of the sets $\cB^\alpha =  \{x \,|\, |s_1(x)| \le \alpha \}$, which are called \emph{isostables}, are important for understanding the dynamics of the system. It can be shown that trajectories with initial conditions on the same isostable $\partial \cB^{\alpha_1}$ converge synchronously toward the fixed point, and reach other isostables $\partial \cB^{\alpha_2}$, with $\alpha_2<\alpha_1$, after a time
\begin{equation}
\label{time_isostable}
\cT = \frac{1}{|\Re(\lambda_1)|} \ln \left(\frac{\alpha_1}{\alpha_2}\right)\,.
\end{equation}
In particular, for $\lambda_1 \in \R$, it follows directly from~\eqref{eq:expansion} that the trajectories starting from $\partial \cB^\alpha$ share the same asymptotic evolution
\begin{equation*}
\phi(t,x,0) \rightarrow x^\ast + v_1 \, \alpha  e^{\lambda_1 t}\,, \quad t\rightarrow \infty\,.
\end{equation*}
Note that isostables could also be defined when the system is driven by an input $u\neq0$, but they are here considered only to describe the dynamics of the unforced system. A more rigorous definition of isostables and more details can be found in~(\cite{mauroy2013isostables}). 

In the case of bistable systems characterized by two equilibria $x^\ast$ and $x^\bullet$ with basins of attraction $\cB(x^\ast)$ and $\cB(x^\bullet)$, respectively, the Koopman operator admits two sets of eigenfunctions $s_k^\ast$ and $s_k^\bullet$. The eigenfunctions $s_k^\ast$ (resp. $s_k^\bullet$) are related to the asymptotic convergence toward $x^\ast$ (resp. $x^\bullet$). 
The dominant eigenfunctions $s_1^\ast$ and $s_1^\bullet$ define two families of isostables, each of which is associated with one equilibrium and lies in the corresponding basin of attraction.

\emph{Monotone Systems and Their Spectral Properties.}
We will study the properties of the system \eqref{sys:f} with respect to a partial order induced by positive cones  in $\R^n$. A set $\cK$ is a \emph{positive cone} if $\Rnn \cK \subseteq \cK$, $\cK + \cK \subseteq \cK$, $\cK\cap -\cK \subseteq \{ 0\}$. A relation $\sim$ is called a {\it partial order} if it is reflexive ($x\sim x$), transitive ($x\sim y$, $y\sim z$ implies $x\sim z$), and antisymmetric ($x\sim y$, $y\sim x$ implies $x = y$). We define a partial order $\succeq_\cK$ through a cone $\cK\in\R^n$ as follows: $x\succeq_\cK y$ if and only if $ x - y \in \cK$. We write $x\not \succeq_\cK y$, if the relation  $x \succeq_\cK y$ does not hold. We also write $x\succ_\cK y$ if $x\succeq_\cK y$ and $x\ne y$, and $x\gg_\cK y$ if $x- y \in \inter(\cK)$. Similarly we can define a partial order on the space of signals $u\in \cU_{\infty}$: $u\succeq_\cK v$ if $u(t) - v(t) \in \cK$ for all $t\ge 0$. 

Systems whose flows preserve a partial order relation $\succeq_\cK$ are called \emph{monotone systems}. We have the following definition.

\begin{defn}\label{def:mon}
The system $\dot x = f(x,u)$ is called \emph{monotone} with respect to the cones $\cK_x$, $\cK_u$ if $\phi(t,x, u)\preceq_{\cK_x} \phi(t,y, v)$ for all $t\ge 0$, and for all $x\preceq_{\cK_x} y$, $u\preceq_{\cK_u} v$. 
\end{defn}
The properties of monotone systems require additional definitions. A function $g:\R^n \rightarrow \R$ is called \emph{increasing} with respect to the cone $\cK$ if $g(x) \le g(y)$ for all $x\preceq_\cK y$. Let $[x,~y]_\cK$ denote the order-interval defined as $[x,~y]_\cK = \{ z | x\preceq_\cK z \preceq_\cK y \}$. A set $\cA$ is called \emph{order-convex} if, for all $x$, $y$ in $\cA$, the interval $[x,~y]_\cK$ is a subset of $\cA$. A set $M$ is called \emph{p-convex} if, for every $x$, $y$ in $M$ such that $x\succeq_{\cK} y$ and every $\lambda\in(0,1)$, we have that $\lambda x+ (1-\lambda) y\in M$. Clearly, order-convexity implies p-convexity. If $\cK = \Rnn^n$ we say that the corresponding partial order is standard. Without loss of generality, we will only consider the standard partial order throughout the paper.
\begin{prop}[\cite{angeli2003monotone}]\label{prop:kamke}
Consider the control system~\eqref{sys:f}, where the sets $\cD$, $\cU$ are p-convex and $f\in C^1(\cD\times \cU)$. Then the system~\eqref{sys:f} is monotone on $\cD\times \cU_{\infty}$ with respect to $\Rnn^n$, $\Rnn^m$ if and only if 
\begin{gather*}
\frac{\partial f_i}{\partial x_j}\ge 0,\quad\forall~i\ne j,\quad(x,u)\in\cl(\cD)\times\cU\\
\frac{\partial f_i}{\partial u_j}\ge 0,\quad\forall~i, j,\quad(x,u)\in\cD\times\cU.
\end{gather*}
\end{prop}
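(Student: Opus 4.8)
The statement is the classical Kamke--M\"uller (quasimonotonicity) characterization adapted to a controlled system, so the plan is to prove the two implications of the equivalence separately. The forward implication, that monotonicity forces the sign conditions, will follow by differentiating the order relation in time at $t=0$. The converse, that the sign conditions force monotonicity, is the substantive direction and reduces to establishing a subtangentiality condition for the difference of two trajectories and then upgrading it to forward invariance of the positive orthant.

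For necessity, suppose \eqref{sys:f} is monotone. Fix $(x,u)\in\cD\times\cU$ with $x\in\inter(\cD)$, and for the state condition compare the initial conditions $x$ and $x+he_j$ with $h>0$ under the common constant control $u$. Monotonicity gives $\phi(t,x,u)\preceq_{\Rnn^n}\phi(t,x+he_j,u)$ for all $t\ge 0$, so every component of the difference is nonnegative. For $i\ne j$ the $i$-th components coincide at $t=0$, so the nonnegative scalar function $t\mapsto\phi_i(t,x+he_j,u)-\phi_i(t,x,u)$ vanishes at $t=0$ and hence has nonnegative right derivative there; that derivative equals $f_i(x+he_j,u)-f_i(x,u)$. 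Dividing by $h$ and letting $h\to0^+$ yields $\partial f_i/\partial x_j\ge0$, which extends to $\cl(\cD)$ by continuity of the $C^1$ partial derivatives. The control condition is obtained identically by perturbing the input from $u$ to $u+he_j$ while keeping $x$ fixed; now all state components agree at $t=0$, so the same argument gives $f_i(x,u+he_j)-f_i(x,u)\ge0$ for every $i$, hence $\partial f_i/\partial u_j\ge0$.

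For sufficiency, take $x_0\preceq_{\Rnn^n}y_0$ and $u\preceq_{\Rnn^m}v$, write $x(t)=\phi(t,x_0,u)$, $y(t)=\phi(t,y_0,v)$, and set $z=y-x$. The key estimate is quasimonotonicity: whenever $x(t)\le y(t)$ componentwise with $x_i(t)=y_i(t)$ for some $i$, then $\dot z_i(t)\ge0$. To see this I would split
\[
f_i(y,v)-f_i(x,u)=\bigl[f_i(y,v)-f_i(x,v)\bigr]+\bigl[f_i(x,v)-f_i(x,u)\bigr].
\]
The second bracket is nonnegative because $v\ge u$ and $\partial f_i/\partial u_k\ge0$ for all $k$. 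For the first bracket I would write it as $\int_0^1\nabla_x f_i(x+s(y-x),v)\cdot(y-x)\,ds$; p-convexity of $\cD$ guarantees the segment $x+s(y-x)$ stays in $\cD$ since $x\le y$, the $k=i$ term vanishes because $y_i=x_i$, and each remaining term is a product of the nonnegative factor $\partial f_i/\partial x_k$ with $y_k-x_k\ge0$, so the bracket is nonnegative as well.

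The remaining and main obstacle is to upgrade this infinitesimal condition into global invariance of the order cone, that is, $z(t)\in\Rnn^n$ for all $t\ge0$. A bare inequality $\dot z_i\ge0$ on the face $\{z_i=0\}$ is too weak to forbid $z_i$ from dipping below zero, so I would use the standard perturbation device: replace $y$ by the solution $y^\epsilon$ of $\dot y^\epsilon=f(y^\epsilon,v)+\epsilon\bfone$, $y^\epsilon(0)=y_0$, for which the same splitting yields the strict inequality $\dot z^\epsilon_i\ge\epsilon>0$ at any point where $z^\epsilon_i=0$ and $z^\epsilon\ge0$. Letting $t_1$ be the first time $z^\epsilon$ meets $\partial\Rnn^n$ produces a contradiction, since at $t_1$ the touching component has nonpositive left derivative yet strictly positive derivative; thus $z^\epsilon(t)\in\Rnn^n$ throughout, and continuous dependence on $\epsilon$ gives $z^\epsilon\to z$ with $z(t)\in\Rnn^n$ as $\epsilon\to0$. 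Care is needed because $u$ and $v$ are only measurable, so $x$ and $y$ are Carath\'eodory solutions and the differential inequalities hold only almost everywhere; the invariance argument should accordingly be phrased with Dini derivatives of $t\mapsto\min_i z^\epsilon_i(t)$ rather than classical derivatives, which is the one genuinely technical point.
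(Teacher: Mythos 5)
This proposition is quoted in the paper from \cite{angeli2003monotone} and no proof is given in the paper itself, so your attempt can only be measured against the cited source. Your route is the classical Kamke--M\"uller argument and is sound in outline. The necessity part is correct as stated: comparing $\phi(t,x,u)$ with $\phi(t,x+he_j,u)$ (resp.\ $\phi(t,x,u+he_j)$), noting the difference vanishes at $t=0$ and is nonnegative afterwards, and differentiating from the right gives exactly the sign conditions; the extension to $\cl(\cD)$ by continuity of the $C^1$ derivatives is fine. The quasimonotonicity estimate in the sufficiency part is also correct, including the correct use of p-convexity of $\cD$ to keep the segment $x+s(y-x)$ inside the domain (you use the analogous p-convexity of $\cU$ for the input bracket implicitly; it should be said, since that is precisely why $\cU$ is assumed p-convex). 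Note that Angeli and Sontag themselves argue differently and more generally: they recast monotonicity as forward invariance of the diagonal-order set $\{(x,y)\,:\,x\preceq_\cK y\}$ for the coupled dynamics and invoke a tangent-cone (subtangentiality/viability) criterion valid for arbitrary cones $\cK$, from which the orthant case yields the stated partial-derivative conditions. Your orthant-specific $\epsilon$-perturbation is more elementary, at the price of not generalizing directly to other cones.

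The one place your argument stops short is the invariance upgrade, and your own framing of the contradiction has two fixable slips. First, $z^\epsilon(0)=y_0-x_0$ may already lie on $\partial\Rnn^n$ (indeed $x_0=y_0$ with $u\preceq v$ is the typical case), so ``the first time $z^\epsilon$ meets $\partial\Rnn^n$'' is $t_1=0$ and no contradiction arises; you must instead take $t_1=\inf\{t\,:\,z^\epsilon(t)\notin\Rnn^n\}$, or additionally perturb the initial condition to $y_0+\epsilon\bfone$. Second, and more substantively, with merely measurable inputs the derivative of $z^\epsilon$ need not exist at the specific time $t_1$, and just past $t_1$ the hypothesis $z^\epsilon\ge 0$ of your quasimonotonicity estimate fails, so neither a left- nor a right-derivative evaluation at $t_1$ is available; switching to Dini derivatives, as you suggest, does not by itself repair this. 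The standard completion is integral: using the local Lipschitz bound from $f\in C^1$, one gets $\dot z^\epsilon_i\ge \epsilon - L\,\max_k (z^\epsilon_k)_-$ for a.e.\ $t$ near $t_1$, where $(\cdot)_-$ denotes the negative part, and Gronwall's inequality applied to $t\mapsto\max_k(z^\epsilon_k(t))_-$ (which vanishes at $t_1$) forces it to remain zero, contradicting the definition of $t_1$. With that Gronwall step supplied, your proof is complete and is essentially the M\"uller--Kamke--Walter argument for the positive orthant.
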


A generalization of this result to other cones can be found in~\cite{angeli2003monotone}. We finally consider the spectral properties of monotone systems that are summarized in the following result. The proof can be found in~\cite{sootla2016basins}.

\begin{prop} \label{prop:mon-eig-fun}
Consider the system $\dot x = f(x)$ with $f\in C^2$, which admits a stable hyperbolic fixed point $x^\ast$ with a basin of attraction $\cB$. Assume that  $\Re(\lambda_1) > \Re(\lambda_j)$ for all $j\ge 2$.  Let $v_1$ be a right eigenvector of the Jacobian matrix $J(x^\ast)$ and let $s_1$ be an eigenfunction corresponding to $\lambda_1$ (with $v_1^T \nabla s_1(x^\ast) = 1$). If the system is monotone with respect to $\Rnn^n$ on $\mathrm{int}(\cB)$, then $\lambda_1$ is real and negative. Moreover, there exist $s_1(\cdot)$ and $v_1$ such that $s_1(x) \ge s_1(y)$ for all $x$, $y\in \cB$ satisfying $x\succeq y$, and $v_1\succ 0 $. 
\end{prop}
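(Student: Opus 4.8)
The plan is to deduce everything from two ingredients: the Perron--Frobenius structure that a monotone system's Jacobian inherits through Kamke's condition, and the representation of $s_1$ as a Laplace average of a linear observable. First I would settle the claims about $\lambda_1$ and $v_1$. By Proposition~\ref{prop:kamke}, monotonicity with respect to $\Rnn^n$ forces $\partial f_i/\partial x_j\ge 0$ for $i\ne j$ on $\cl(\cD)$, so $J(x^\ast)$ is a Metzler matrix. Writing $J(x^\ast)=A-cI$ with $A\ge 0$ and $c>0$ large, the Perron--Frobenius theorem for the nonnegative matrix $A$ shows that the eigenvalue of $J(x^\ast)$ with the largest real part is real and carries a nonnegative eigenvector. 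By the strict dominance assumption $\Re(\lambda_1)>\Re(\lambda_j)$, this eigenvalue is $\lambda_1$; it is therefore real and, being strictly dominant in a diagonalizable spectrum, simple, and one can select $v_1\succeq 0$, i.e.\ $v_1\succ 0$. Hyperbolic stability ($\Re(\lambda_j)<0$) gives $\lambda_1<0$. Applying the same reasoning to $J(x^\ast)^T$, which is also Metzler, produces a nonnegative left eigenvector $w_1\succeq 0$; simplicity of $\lambda_1$ yields $w_1^T v_1\ne 0$, and nonnegativity upgrades this to $w_1^T v_1>0$.

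Next I would represent $s_1$ through the flow. Differentiating the identity $f^T\nabla s_1=\lambda_1 s_1$ at $x^\ast$ shows that $\nabla s_1(x^\ast)$ is a left eigenvector of $J(x^\ast)$ for $\lambda_1$, hence parallel to $w_1$. As recalled after~\eqref{laplace-average}, for the observable $g(z)=w_1^T(z-x^\ast)$ the Laplace average $g^\ast_{\lambda_1}$ coincides with $s_1$ up to a scalar $\kappa$, and near $x^\ast$ one has $g^\ast_{\lambda_1}(x)=w_1^T(x-x^\ast)+o(\|x-x^\ast\|)$, so $\nabla g^\ast_{\lambda_1}(x^\ast)=w_1$. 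The normalization $v_1^T\nabla s_1(x^\ast)=1$ then fixes $\kappa=1/(w_1^T v_1)>0$, giving
\[
s_1(x)=\kappa\lim_{T\to\infty}\frac{1}{T}\int_0^T w_1^T\bigl(\phi(t,x,0)-x^\ast\bigr)e^{-\lambda_1 t}\,dt,\quad \kappa>0.
\]

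Finally I would read off monotonicity. Fix $x\succeq y$ in $\cB$. Definition~\ref{def:mon} with $u=v=0$ gives $\phi(t,x,0)\succeq\phi(t,y,0)$ for all $t\ge 0$, so $w_1^T(\phi(t,x,0)-\phi(t,y,0))\ge 0$ because $w_1\succeq 0$. Multiplying by $e^{-\lambda_1 t}>0$, averaging, and using $\kappa>0$ yields $s_1(x)-s_1(y)\ge 0$. The main obstacle is the sign bookkeeping in this step: one must know not merely that $w_1$ is sign-definite but that it is nonnegative and that $w_1^T v_1>0$, so that the constant $\kappa$ relating $s_1$ to the Laplace average is genuinely positive; both facts hinge on $\lambda_1$ being the simple, strictly dominant Perron root. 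A secondary technical point is justifying that the Laplace average of the linear observable converges to a scalar multiple of $s_1$ and that the subdominant and higher-order modes in~\eqref{eq:expansion} do not spoil the sign after scaling by $e^{-\lambda_1 t}$, which again relies on the strict dominance of $\lambda_1$.
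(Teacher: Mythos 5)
Your proof is correct, and it is essentially the intended argument: the paper itself states Proposition~\ref{prop:mon-eig-fun} without proof, deferring to \cite{sootla2016basins}, where the same route is taken --- Kamke's conditions make $J(x^\ast)$ Metzler, Perron--Frobenius yields a real, simple, dominant $\lambda_1$ (negative by hyperbolic stability) with nonnegative right and left eigenvectors $v_1$, $w_1$, and the Laplace average of $w_1^T(x-x^\ast)$ represents $s_1$ as a positive multiple of a time-average of increasing observables, so order preservation of the unforced flow gives $s_1(x)\ge s_1(y)$ for $x\succeq y$. Your sign bookkeeping ($w_1^T v_1>0$, hence $\kappa>0$) is exactly the point that closes the argument, and the technical step you flag --- convergence of the Laplace average of a linear observable to a scalar multiple of $s_1$ --- is precisely what the paper grants in its preliminaries after~\eqref{laplace-average}, so no genuine gap remains.
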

This result shows that the sets $\cB_\alpha = \{x| |s_1(x)|\le \alpha \}$ are order-convex for any $\alpha>0$ (cf.~\cite{sootla2016basins}).
\section{Shaping Pulses to Switch Between Fixed Points}
\label{s:shaping-pulses}
In this paper we consider the problem of switching between two stable fixed points by using temporal pulses~\eqref{eq:pulse}. We formalize this problem by making the following assumptions:
  \begin{enumerate}
  \item[{\bf A1.}] Let $f(x,u)$ in~\eqref{sys:f} be continuous in $(x,u)$ and $C^2$ in $x$ for every fixed $u$ on $\cD_f\times \cU$. 
  \item[{\bf A2.}] Let the unforced system~\eqref{sys:f} have two stable hyperbolic fixed points in $\cD_f$, denoted by $x^\ast$ and $x^\bullet$, and let $\cD_f=\cl(\cB(x^\ast)\cup \cB(x^\bullet))$.
  \item[{\bf A3.}] For any $u = \mu h(\cdot, \tau)$ with finite $\mu$ and $\tau$ let $\phi(t, x^\ast, u)$ belong to $\inter(\cD_f)$. Moreover, let there exist $\mu>0$, $\tau>0$ such that $\lim\limits_{t\rightarrow\infty}\phi(t, x^\ast,\mu h(\cdot,\tau))  = x^\bullet$. 
\end{enumerate}

Assumption~{A1} guarantees existence and uniqueness of solutions, while Assumption~A2 defines a bistable system. Note that in~\cite{sootla2015pulsesaut}, the assumptions A1--A2 are less restrictive. That is, $f(x,u)$ is Lipschitz continuous in $x$ for every fixed $u$, and the fixed points are asymptotically stable. Our assumptions are guided by the use of the Koopman operator.  Assumptions~A1 and~A2 guarantee the existence of eigenfunctions $s_1^\ast(x)$ and $s_1^\bullet(x)$ that are continuously differentiable on each basin of attraction. Assumption~{A3} is technical and ensures that the switching problem is feasible.

The goal of our \emph{control problem} is to characterize the so-called switching set $\cS$ defined as
\begin{gather} \label{switching-set}
\cS = \left\{(\mu, \tau) \in \Rp^2 \,\Bigl| \forall t > \tau:~\phi(t, x^\ast,\mu h(\cdot,\tau)) \in \cB(x^\bullet) \right\}.
\end{gather} 
It is shown in~\cite{sootla2015pulsesaut} that the set $\cS$ is simply connected and order-convex under some assumptions, a property which is useful to obtain an efficient computational procedure. In particular, the boundary $\partial \cS$, called the switching separatrix, is such that for all $(\mu,\tau)$, $(\nu, \xi)$ in $\partial \cS$ we cannot have that $\mu > \nu$ and $\tau > \xi$. The following result sums up one of the theoretical contribution in~\cite{sootla2015pulsesaut}. 
\begin{prop}\label{prop:comp-sys-sw}
Let the system~$\dot x = f(x,u)$ satisfy Assumptions~A1--A3. The following conditions are equivalent:

(i) the set $\cS$ is order-convex and simply connected;

(ii) let $\phi(\tau_1, x^\ast, \mu_1 h(\cdot, \tau_1))$ belong to $\cB(x^\bullet)$, then the flow $\phi(\tau_2, x^\ast, \mu_2 h(\cdot, \tau_2))\in\cB(x^\bullet)$ for all $\mu_2 \ge \mu_1$, $\tau_2 \ge \tau_1$.

Moreover, if the system~$\dot x = f(x,u)$ is monotone with respect to $\cK\times\R$ on $\cD\times\cU_\infty$ and satisfies Assumptions~A1--A3, then (i)-(ii) hold.
\end{prop}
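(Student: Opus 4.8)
The plan is to reduce membership in $\cS$ to a condition on the single state reached at the end of the pulse, then treat the equivalence and the monotone case separately. For $t>\tau$ the input $\mu h(\cdot,\tau)$ vanishes, so $\phi(t,x^\ast,\mu h(\cdot,\tau))=\phi(t-\tau,\psi(\mu,\tau),0)$ with $\psi(\mu,\tau):=\phi(\tau,x^\ast,\mu h(\cdot,\tau))$. Since $\cB(x^\bullet)$ is open and forward invariant under the unforced flow while its boundary (the basin separatrix) is invariant, I would show that $(\mu,\tau)\in\cS$ if and only if $\psi(\mu,\tau)\in\cB(x^\bullet)$: if the end state lies in the open basin it stays there, whereas if it lies on the separatrix it never enters $\cB(x^\bullet)$. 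By continuous dependence on data and parameters $\psi$ is continuous, so $\cS=\psi^{-1}(\cB(x^\bullet))$ is open, and it is nonempty by Assumption~A3. In these terms condition (ii) is exactly the statement that $\cS$ is \emph{upward closed} for the componentwise order: $(\mu_1,\tau_1)\in\cS$ and $(\mu_2,\tau_2)\succeq(\mu_1,\tau_1)$ imply $(\mu_2,\tau_2)\in\cS$.

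For (ii)$\Rightarrow$(i), order-convexity is immediate, since for comparable $(\mu_1,\tau_1),(\mu_2,\tau_2)\in\cS$ every point of the order interval $[(\mu_1,\tau_1),(\mu_2,\tau_2)]_\cK$ dominates the smaller endpoint and hence lies in $\cS$. For simple connectedness I would use that an open connected subset of the plane is simply connected if and only if its complement has no bounded component. Upward-closedness delivers both facts: any two points of $\cS$ are joined to their componentwise supremum (which lies in $\cS$) by coordinate-monotone segments contained in $\cS$, so $\cS$ is connected; and for any $q\notin\cS$ the ray $q-t(1,1)$, $t\ge 0$, remains in the complement, since otherwise upward-closedness would force $q\in\cS$. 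Thus every complementary point lies in an unbounded component, and $\cS$ is simply connected.

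For (i)$\Rightarrow$(ii) I would exploit the sandwich property of order-convex sets. Given $(\mu_1,\tau_1)\in\cS$ and $(\mu_2,\tau_2)\succeq(\mu_1,\tau_1)$, if I can exhibit some $s\in\cS$ with $s\succeq(\mu_2,\tau_2)$, then $(\mu_2,\tau_2)\in[(\mu_1,\tau_1),s]_\cK$ and order-convexity yields $(\mu_2,\tau_2)\in\cS$. Hence the entire content of this direction is the \emph{cofinality} of $\cS$: every point of $\Rp^2$ is dominated by a point of $\cS$. I expect this to be the main obstacle, because order-convexity and simple connectedness alone do not rule out bounded sets — an axis-aligned open rectangle satisfies (i) but not (ii) — so cofinality cannot be purely topological and must come from the dynamics, namely from the existence of arbitrarily strong and long switching pulses. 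I would derive it from Assumption~A3 together with the structure of $\cS$, ruling out a maximal boundary point whose whole upper-right cone avoids $\cS$ (the configuration that order-convexity forces whenever upward-closedness fails).

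For the \emph{moreover} part I would prove (ii) directly and then invoke (ii)$\Rightarrow$(i). For $\mu_2\ge\mu_1\ge0$ and $\tau_2\ge\tau_1$ the pulses obey $\mu_1 h(t,\tau_1)\le\mu_2 h(t,\tau_2)$ for all $t\ge0$, so monotonicity with respect to $\cK\times\R$ (Definition~\ref{def:mon}, characterized by Proposition~\ref{prop:kamke}) gives $\phi(t,x^\ast,\mu_1 h(\cdot,\tau_1))\preceq_\cK\phi(t,x^\ast,\mu_2 h(\cdot,\tau_2))$ for all $t\ge0$. If $(\mu_1,\tau_1)\in\cS$, the left-hand flow converges to $x^\bullet$, so for $T>\tau_2$ large it lies in $\cB(x^\bullet)$, and the right-hand flow then dominates a point of $\cB(x^\bullet)$. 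The remaining ingredient is that $\cB(x^\bullet)$ is upward closed in state space, $x^\bullet$ being the upper equilibrium reached by increasing the input (a standard property of bistable monotone systems, cf.~\cite{sootla2016basins}); this places $\phi(T,x^\ast,\mu_2 h(\cdot,\tau_2))$ in $\cB(x^\bullet)$, whence $(\mu_2,\tau_2)\in\cS$ and (ii) holds. Order-convexity and simple connectedness of $\cS$ then follow from the already-proved implication (ii)$\Rightarrow$(i).
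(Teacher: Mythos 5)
You should first note a mismatch with the source: this paper does not prove Proposition~\ref{prop:comp-sys-sw} at all --- it is quoted from \cite{sootla2015pulsesaut} --- so your attempt can only be checked against the cited machinery and the related arguments the paper does give (Proposition~\ref{prop:pos-der} and the proof of Theorem~\ref{thm:mon-switch-iso-mon-sys}). Within that frame, your reduction of membership in $\cS$ to $\psi(\mu,\tau)=\phi(\tau,x^\ast,\mu h(\cdot,\tau))\in\cB(x^\bullet)$ (via invariance of the basin and of its boundary), your reformulation of (ii) as upward-closedness, and your direction (ii)$\Rightarrow$(i) --- order-convexity from upward-closedness, connectedness via componentwise suprema, and simple connectedness via the no-bounded-complementary-component criterion with the downward ray --- are correct and complete. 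The genuine gap is (i)$\Rightarrow$(ii): you correctly isolate the missing ingredient (cofinality of $\cS$ in $\Rp^2$) and correctly observe it cannot follow from topology alone, but you then only announce that you ``would derive it from Assumption~A3 together with the structure of $\cS$.'' That is not a proof, and it is not clear it can be completed as sketched: A3 supplies only one switching pair and confinement of flows to $\inter(\cD_f)$, and a bounded set such as $\left\{(\mu,\tau)\,:\,\mu>0.1,\ \tau>0.1,\ 1<\mu+\tau<2\right\}$ is open, order-convex, simply connected, violates (ii), and is compatible with every slice-wise constraint that continuity and openness of $\psi^{-1}(\cB(x^\bullet))$ impose (each $\{\tau:(\mu,\tau)\in\cS\}$ is an open interval, possibly bounded, as happens when the constant-$\mu$ trajectory traverses $\cB(x^\bullet)$ and exits). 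So ruling out the ``maximal boundary point with empty upper cone'' configuration requires a substantive dynamical argument that your proposal does not contain; since this is the hard half of the equivalence, the proposition is not established.

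There is a second, smaller gap in the \emph{moreover} part. The comparison $\phi(t,x^\ast,\mu_1 h(\cdot,\tau_1))\preceq\phi(t,x^\ast,\mu_2 h(\cdot,\tau_2))$ is right and matches how the paper uses monotonicity together with Proposition~\ref{prop:pos-der} in the proof of Theorem~\ref{thm:mon-switch-iso-mon-sys}, but your closing step rests on the assertion that $\cB(x^\bullet)$ is \emph{upward closed}, which you call standard without proof. What the paper's toolkit actually provides (Proposition~\ref{prop:mon-eig-fun} and \cite{sootla2016basins}) is \emph{order-convexity} of the basin and of the sublevel sets of $|s_1^\bullet|$, which is strictly weaker. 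Dominating a point of $\cB(x^\bullet)$ does exclude landing in $\cB(x^\ast)$ (that would force $x^\ast\succeq x^\bullet$, contradicting $x^\bullet\succ x^\ast$, which as you note follows from A3 plus monotonicity), but it does not exclude landing on the invariant part of $\partial\cB(x^\bullet)$ inside $\inter(\cD_f)$, from which the unforced flow never enters the open basin; to finish one must rule out omega-limit points $\succeq x^\bullet$ on the basin boundary, and this needs an argument (or a strictness hypothesis of the kind the paper adds in Theorem~\ref{thm:mon-switch-iso-mon-sys}(iii)). It is telling that the paper's own proof of Theorem~\ref{thm:mon-switch-iso-mon-sys}(i) avoids this issue by sandwiching the intermediate state between two points \emph{already known} to lie in $\cB(x^\bullet)$ and invoking order-convexity --- an option unavailable in Proposition~\ref{prop:comp-sys-sw}, where only the smaller pulse is assumed to switch. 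You should either prove the upward-closedness claim under A1--A3 or restructure the final step so that only order-convexity of $\cB(x^\bullet)$ is used.
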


We also note that the results in~\cite{sootla2015pulsesaut} were extended to account for parametric uncertainty in the vector field under additional constraints. In particular, it is possible to estimate bounds on the switching set $\cS$.

Now, we proceed by providing an operator-theoretic point of view on shaping pulses, which allows to study rates of convergence to the fixed point.
\begin{defn}
Let $\cS\subseteq \Rp^2$ be a set of $(\mu, \tau)$ such that $\phi(\tau, x^\ast, \mu)\in \cB(x^\bullet)$. We define the \emph{switching function} $r: \cS \mapsto \C$ by
\begin{equation*}
r(\mu,\tau) = s_1^\bullet(\phi(\tau, x^\ast, \mu ))
\end{equation*}
for all $(\mu,\tau)$ such that $\phi(\tau, x^\ast, \mu)\in \cB(x^\bullet)$.
\end{defn}
The level sets $\partial \cS^\alpha$ of $|r|$ defined as
\begin{gather*}
\partial \cS^\alpha = \left\{ (\mu,\tau)  \in \cS \,\Bigl| |r(\mu,\tau)| = \alpha \right\}, \quad \alpha \geq 0
\end{gather*}
are reminiscent of the isostables $\partial \cB^\alpha$, which are the level sets of $|s_1|$. We also consider the sublevel sets of $|r|$
\begin{gather*}
\cS^\alpha = \left\{ (\mu,\tau) \in \Rp^2 \,\Bigl| |r(\mu,\tau)|  \le \alpha \right\}
\end{gather*}
and it is straightforward to show that the switching set $\cS$ in~\eqref{switching-set} is equal to $\cS^\infty=\bigcup_{\alpha\ge 0} \cS^\alpha$. The level sets $\partial \cS^\alpha$ can therefore be seen as a generalization of the switching separatrix $\partial \cS=\partial \cS^\infty$.

The level sets $\partial \cS^\alpha$ capture the pairs $(\mu,\tau)$ such that the trajectories $\phi(t, x^\ast, \mu h(\cdot,\tau))$ reach the isostable $\partial \cB^\alpha$ at time $t=\tau$ and cross the same isostables for all $t\ge \tau$. This implies that the trajectories with the pairs $(\mu,\tau)$ on the same level sets $\partial \cS^\alpha$ will take the same time to converge towards the fixed point $x^\bullet$ when the control is switched off. We can for instance estimate the time $\cT$ needed to reach the set $\cB^\varepsilon$ for a positive $\varepsilon$. It follows from~\eqref{time_isostable} that, for $(\mu,\tau)\in \partial \cS^\alpha$, we have
\begin{gather}
\cT(\mu,\tau,\varepsilon) = \frac{1}{|\lambda_1|} \ln \left(\frac{\alpha}{\varepsilon} \right),
\end{gather}
where a negative $\cT$ means that the trajectory is inside the set $\cB^\varepsilon$ at time $t=\tau$. Hence, the quantity $\cT_{\rm tot} = \cT + \tau$ is the time it takes to reach $\cB^\varepsilon$ if $\cT$ is nonnegative. For small enough $\varepsilon$, the function $\cT_{\rm tot}(\mu,\tau,\varepsilon)$ approximates the amount of time required to reach a small neighborhood of the fixed point $x^\bullet$. 

In order to compute the switching function $r$, we can again employ Laplace averages
\begin{multline}
r(\mu,\tau)  = \lim\limits_{T\rightarrow\infty} \frac{1}{T} \int\limits_0^T g\circ \phi(t, \phi(\tau,x^\ast, \mu),0) e^{-\lambda_1 t}d t \\
 = \lim\limits_{T\rightarrow\infty} \frac{1}{T}\int\limits_\tau^T g\circ \phi(t, x^\ast, \mu h(\cdot,\tau)) e^{-\lambda_1 (t-\tau)}d t ,
 \label{eigen-fun-cont}
\end{multline}
where $\lambda_1$ is the dominant Koopman eigenvalue and $g$ satisfies $g(x^\bullet)=0$ and $ v_1^T \nabla g(x^\bullet)\neq 0$. Note again that the limit does not converge unless $\phi(\tau,x^\ast, \mu)$ belongs to $\cB(x^\bullet)$.

The computation of $\cS^\alpha$ is not an easy task in general, but certainly possible. However, additional assumptions on the system simplify the computation of these sets. From this point on we will assume that $r(\mu,\tau)$ has only real values (i.e. $s_1^\bullet \in \mathbb{R}$), which holds if the dominant Koopman eigenvalue on $\cB(x^\bullet)$ is real (see~\cite{mauroy2013isostables}). In this case, the set $\partial \cS^\alpha$ can be split into two sets 
\begin{align*}
\partial_- \cS^\alpha &= \left\{ (\mu,\tau)  \in \Rp^2 \,\Bigl| r(\mu,\tau) = -\alpha \right\}, \\
\partial_+ \cS^\alpha &= \left\{ (\mu,\tau)  \in \Rp^2 \,\Bigl| r(\mu,\tau) = \alpha \right\}.
\end{align*}
If $\cS^\alpha$ is order-convex (as it is shown below for the case of monotone systems), then  $\partial_- \cS^\alpha$ and  $\partial_+ \cS^\alpha$ are the sets of minimal and maximal elements of $\cS^\alpha$, respectively. That is, if $x \ll y$ for some $x \in\partial_+ \cS^\alpha$ (respectively, if $x\gg y$ for some $x\in\partial_-^\alpha \cS$) then $y\not\in\cS$. This implies that $\partial_- \cS^\alpha$ and $\partial_+ \cS^\alpha$ are monotone maps, which significantly facilitates computations of $\cS^\alpha$ by applying the algorithm from~\cite{sootla2016basins} with a minor modification.


Monotonicity also plays a role in the properties of the sublevel sets $\cS^\alpha$, as it does in the properties of the switching separatrix. The main result of the section establishes that, for monotone systems, the sets $\cS^\alpha$ are order-convex and $\partial \cS^\alpha$ are monotone maps.

\begin{thm} \label{thm:mon-switch-iso-mon-sys}
Let the system~\eqref{sys:f} satisfy Assumptions~A1--A3 and be monotone on $\cD\times \cU^\infty$. Then

(i) the set $\cS^\alpha$ is order-convex (with respect to the positive orthant) for any non-zero $\alpha$;

(ii) the set $\partial_+ \cS^\alpha$ is a monotone map, that is for all $(\mu_1, \tau_1)$, $(\mu_2, \tau_2)\in \partial_+ \cS^\alpha$, if $\tau_1 < \tau_2$ then $\mu_1 \ge \mu_2$, and if $\mu_1 < \mu_2$ then $\tau_1 \ge \tau_2$. Moreover, the set $\partial_- \cS^\alpha$ is a graph of a monotonically decreasing function for any finite non-zero $\alpha$;

(iii) if additionally $\phi(t , x, \mu) \gg \phi(t , x, \nu)$ for all $x$, all $\mu > \nu\ge 0$ and all $t>0$, then $\partial_- \cS^\alpha$ and $\partial_+ \cS^\alpha$ are graphs of monotonically decreasing functions for any finite non-zero $\alpha$.
\end{thm}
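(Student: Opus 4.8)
The plan is to reduce all three claims to a single monotonicity property of the switching function. \emph{Key lemma:} the map $(\mu,\tau)\mapsto\phi(\tau,x^\ast,\mu)$ is order-preserving from $\Rp^2$ (with the product order) into $(\R^n,\preceq)$, so that $r$ is nondecreasing in $(\mu,\tau)$. Monotonicity in $\mu$ is immediate from Definition~\ref{def:mon}, since for fixed $\tau$ the inputs satisfy $\mu_1 h(\cdot,\tau)\preceq\mu_2 h(\cdot,\tau)$ whenever $\mu_1\le\mu_2$. Monotonicity in $\tau$ follows from $f(x^\ast,\mu)\succeq f(x^\ast,0)=0$ (the input increases the field and $x^\ast$ is an unforced equilibrium): the forced velocity starts in the cone, and because the Jacobian of a monotone system is Metzler its variational flow is order-preserving, so the velocity stays in the cone and $\tau\mapsto\phi(\tau,x^\ast,\mu)$ is nondecreasing. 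Composing with the increasing eigenfunction $s_1^\bullet$ of Proposition~\ref{prop:mon-eig-fun} gives the claim.

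Given this, part (i) is one line: for $(\mu_1,\tau_1)\preceq(\mu_2,\tau_2)$ in $\cS^\alpha$ and any $(\mu,\tau)$ in the order-interval between them, monotonicity yields $-\alpha\le r(\mu_1,\tau_1)\le r(\mu,\tau)\le r(\mu_2,\tau_2)\le\alpha$, so $(\mu,\tau)\in\cS^\alpha$.

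The heart of (ii)--(iii) is upgrading ``nondecreasing'' to ``strictly increasing''. Writing $\phi=\phi(\tau,x^\ast,\mu)$, differentiation along the forced trajectory gives
\[
\frac{d}{d\tau}\,r(\mu,\tau)=\nabla s_1^\bullet(\phi)\cdot f(\phi,\mu)\ge\nabla s_1^\bullet(\phi)\cdot f(\phi,0)=\lambda_1\,s_1^\bullet(\phi)=\lambda_1\,r(\mu,\tau),
\]
where the inequality uses $\nabla s_1^\bullet\succeq0$ (as $s_1^\bullet$ is increasing) and $f(\phi,\mu)\succeq f(\phi,0)$, and the middle equality is the eigenfunction identity $f(\cdot,0)^T\nabla s_1^\bullet=\lambda_1 s_1^\bullet$. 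Since $\lambda_1<0$, on the region $\{r<0\}$ this forces $\tfrac{d}{d\tau}r\ge\lambda_1 r>0$, i.e. strict increase in $\tau$. Hence on $\partial_-\cS^\alpha$ (where $r=-\alpha<0$) each $\mu$ admits a unique $\tau$, so $\partial_-\cS^\alpha$ is the graph of a function $\tau=g(\mu)$, and monotonicity in $\mu$ makes $g$ decreasing, proving the $\partial_-$ assertion of (ii). For $\partial_+\cS^\alpha$ the bound $\lambda_1 r$ is negative and no strictness is available; instead I argue that monotonicity of $r$ and order-convexity of $\cS^\alpha$ prevent any two points of $\partial_+\cS^\alpha$ from being strictly ordered (such a pair would force $r\equiv\alpha$ on the whole rectangle between them), which is exactly the monotone-map condition. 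For (iii), the extra hypothesis $\phi(t,x,\mu)\gg\phi(t,x,\nu)$ makes $r$ strictly increasing in $\mu$; this removes horizontal level segments, so each $\tau$ now admits a unique $\mu$ and both $\partial_+\cS^\alpha$ and $\partial_-\cS^\alpha$ become graphs of decreasing functions $\mu=h(\tau)$.

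The main obstacle is entirely the strictness, and specifically the asymmetry it creates. The inequality $\tfrac{d}{d\tau}r\ge\lambda_1 r$ is the device that buys strict $\tau$-monotonicity, but it only bites where $r<0$; at positive levels one must either accept the weaker monotone-map statement (part (ii)) or import the strong-ordering hypothesis (part (iii)). I expect the two genuinely delicate points to be: excluding degenerate two-dimensional level sets of $r$ at positive values, on which the antichain claim for $\partial_+$ rests; and justifying that the nonconstant eigenfunction $s_1^\bullet$ is strictly increasing along strongly ordered pairs $b\gg a$, for which I would show $\nabla s_1^\bullet\cdot(b-a)>0$ using $\nabla s_1^\bullet\succeq0$ together with the fact that $\nabla s_1^\bullet$ cannot vanish on an open set (else the eigenfunction equation forces $s_1^\bullet\equiv0$ there).
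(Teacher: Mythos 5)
Your proposal is correct in substance and shares the skeleton of the paper's proof: monotonicity of $r$ deduced from order-preservation of $(\mu,\tau)\mapsto\phi(\tau,x^\ast,\mu)$, strictness available only at negative levels, an antichain argument for $\partial_+\cS^\alpha$, and the strong-order hypothesis doing the work in (iii). The differences are in implementation. For the key lemma, the paper's Proposition~\ref{prop:pos-der} gets $\tau$-monotonicity purely from the semigroup property ($\phi(h,x^\ast,\mu)\succeq\phi(h,x^\ast,0)=x^\ast$, then flowing both points with the same input), which is more elementary than your variational/Metzler argument and requires no smoothness beyond A1. For strictness on $\partial_-\cS^\alpha$, the paper evaluates the free decay at finite time, $s_1^\bullet(\phi(\tau_2,x^\ast,\mu_1 h(\cdot,\tau_1)))=-\alpha e^{\lambda_1(\tau_2-\tau_1)}>-\alpha$, and contradicts the monotone comparison~\eqref{cond:stab}; your differential inequality $\tfrac{d}{d\tau}r\ge\lambda_1 r$ is the infinitesimal version of the same mechanism and is valid, with two pieces of housekeeping you should make explicit: re-entry of $\tau\mapsto r(\mu,\tau)$ to the level $-\alpha$ after an excursion must be excluded (immediate, since returning requires decrease at a point where $r<0$), and $r(\mu,\cdot)$ must be defined, i.e.\ the flow must stay in $\cB(x^\bullet)$, along the whole $\tau$-interval — which follows from order-convexity of the sets $\cB_\alpha$ (remark after Proposition~\ref{prop:mon-eig-fun}); the same ingredient is what makes your one-line sandwich in (i) legitimate at the intermediate point, and is also left implicit in the paper.

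Your two flagged delicate points are real, and it is worth knowing how the paper handles them. For the $\partial_+$ antichain, the paper simply asserts that level-$(+\alpha)$ points are maximal elements of the order-convex $\cS^\alpha$ and offers no argument excluding a plateau $r\equiv\alpha$ on a rectangle; so your candid non-closure matches the state of the paper's own proof rather than missing an idea contained in it (and, consistently with the theorem's asymmetry, your inequality $\tfrac{d}{d\tau}r\ge\lambda_1 r$ gives nothing at positive levels since $\lambda_1 r<0$ there). For strict increase of $s_1^\bullet$ along strongly ordered pairs — which the paper also uses silently in (iii) (``or equivalently $r(\mu_1,\tau_1)<r(\mu_2,\tau_2)$'') — your proposed fix is insufficient as stated: the segment joining $a\ll b$ is not an open set, so ``$\nabla s_1^\bullet$ does not vanish on open sets'' does not preclude vanishing along the segment. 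The correct patch is pointwise and one line: the eigenfunction identity $f(x,0)^T\nabla s_1^\bullet(x)=\lambda_1 s_1^\bullet(x)$ gives $\nabla s_1^\bullet(x)=0\Rightarrow s_1^\bullet(x)=0$; if $s_1^\bullet(a)=s_1^\bullet(b)=\alpha\neq0$ with $a\ll b$, then $s_1^\bullet\equiv\alpha$ on the segment (it is nondecreasing along it), so at each point of the segment $\nabla s_1^\bullet\succeq0$ is nonzero while $b-a\gg0$, forcing $\nabla s_1^\bullet\cdot(b-a)>0$, contradicting constancy. This works only for $\alpha\neq0$, matching the theorem's restriction to non-zero levels. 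Finally, in (iii) your argument delivers a graph over $\tau$ that is weakly decreasing; if strict decrease is intended, you must additionally exclude vertical pairs ($\mu_1=\mu_2$, $\tau_1<\tau_2$) at level $+\alpha$, which the paper does via the chain $\phi(\tau_2,x^\ast,\mu_2 h(\cdot,\tau_2))=\phi(\tau_2-\tau_1,\phi(\tau_1,x^\ast,\mu_2),\mu_2)\gg\phi(\tau_2-\tau_1,\phi(\tau_1,x^\ast,\mu_2),0)$, i.e.\ by comparing inputs $\mu_2>0$ and $0$ over $[\tau_1,\tau_2]$ under the strong-order hypothesis.
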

The proof of Theorem~\ref{thm:mon-switch-iso-mon-sys} is in Appendix~\ref{app:proof}. An interesting detail is that the level sets $\partial_-\cS^\alpha$ are graphs of decreasing functions. This implies that the switching separatrix $\partial\cS^\infty$ can be approximated by a graph of a function by setting $\alpha \gg 0$. We can also partially recover the results in~\cite{sootla2015pulsesaut} by letting $\alpha\rightarrow +\infty$. Note, however, that $\partial_- \cS^{\infty}$ is not necessarily a graph of a function, since strict inequalities may no longer hold in the limit. 
\section{Examples}
\label{s:example}

\emph{Eight Species Generalized Repressilator.} This system is an academic example (cf.~\cite{Strelkowa10}), where each of the species represses another species in a ring topology. The corresponding dynamic equations for a symmetric generalized repressilator are as follows: 
  \begin{equation}
    \label{eq:gr}
    \begin{aligned}
   \dot x_1 &= \frac{p_{1}}{1 + (x_{8}/p_2)^{p_3}} + p_4 - p_5 x_1 + u, \\
   \dot x_2 &= \frac{p_{1}}{1 + (x_{1}/p_2)^{p_3}} + p_4 - p_5 x_2, \\
   \dot x_i &= \frac{p_{1}}{1 + (x_{i-1}/p_2)^{p_3}} + p_4 - p_5 x_i,~\forall i = 3,\dots 8,  
    \end{aligned}
  \end{equation}
where $p_1 = 100$, $p_2 = 1$, $p_3 = 2$, $p_4 = 1$, and $p_5 = 1$. This system has two stable equilibria $x^\ast$ and $x^\bullet$ and is monotone with respect to the cones $\cK_x = P_x \R^8$ and $\cK_u = \R$, where $P_x= \diag([1,~-1,~1,~-1,~1,~-1,~1,~-1])$. We have also $x^\ast \preceq_{\cK_x} x^\bullet$. It can be shown that the unforced system is strongly monotone in the interior of $\Rnn^8$ for all positive parameter values. One can also verify that there exist pulse control signals $u$ that switch the system from the state $x^\ast$ to the state $x^\bullet$.

The level sets $\partial \cS^\alpha$ are depicted in Figure~\ref{fig:switching-iso}, where instead of the values of the level sets we provide the time needed to converge to $\partial\cB^{0.01}$. 
As the reader may notice, the two curves related to $\cT=5$ (i.e. blue solid curves) lie close to each other. They approximate the pairs $(\mu,\tau)$ that drive the flow to the zero level set of $s_1^\bullet(x)$. It is also noticeable that the level sets $\partial \cS^\alpha$ are less dense on the right of these lines. This is explained by the fact that the flow is driven by the pulse beyond the zero level set of $s_1^\bullet(x)$ and has to counteract the dynamics of the system.

The generalized repressilator is a monotone system, and hence the premise of Theorem~\ref{thm:mon-switch-iso-mon-sys} is fulfilled. The level sets $\partial \cS^\alpha$ in Figure~\ref{fig:switching-iso} appear to be graphs of monotonically decreasing functions, an observation which is consistent with the claim of Theorem~\ref{thm:mon-switch-iso-mon-sys}.

%

\emph{Toxin-antitoxin system.} Consider the toxin-antitoxin system studied in~\cite{cataudella2013conditional}. 
\begin{align*}
\dot T &= \frac{\sigma_T}{\left(1 + \frac{[A_f][T_f]}{K_0}\right)(1+\beta_M [T_f])} - \dfrac{1}{(1+\beta_C [T_f])} T \\
\dot A &= \frac{\sigma_A}{\left(1 + \frac{[A_f][T_f]}{K_0}\right)(1+\beta_M [T_f])} - \Gamma_A  A + u \\
\varepsilon [\dot A_f] &= A - \left([A_f] + \dfrac{[A_f] [T_f]}{K_T} + \dfrac{[A_f] [T_f]^2}{K_T K_{T T}}\right) \\ 
\varepsilon [\dot T_f] &= T - \left([T_f] + \dfrac{[A_f] [T_f]}{K_T} + 2 \dfrac{[A_f] [T_f]^2}{K_T K_{T T}}\right), 
\end{align*}
where $A$ and $T$ is the total number of toxin and antitoxin proteins, respectively, while $[A_f]$, $[T_f]$ is the number of free toxin and antitoxin proteins.
In~\cite{cataudella2013conditional}, the authors considered the model with $\varepsilon = 0$. In order to simplify our analysis we set $\varepsilon = 10^{-6}$.
For the parameters
\begin{gather*}
\sigma_T = 166.28,~~K_0 = 1,~~\beta_M = \beta_c =0.16,~~\sigma_A = 10^2\\
\Gamma_A = 0.2,~~K_T = K_{TT} = 0.3,
\end{gather*}
the system is bistable with two stable steady states:
\begin{gather*}
 x^{\bullet} = \begin{pmatrix}  27.1517 &  80.5151 & 58.4429 & 0.0877  \end{pmatrix} \\ 
 x^\ast  = \begin{pmatrix} 162.8103 & 26.2221  &  0.0002 & 110.4375 \end{pmatrix}. 
\end{gather*}

It can be verified that the system is not monotone with respect to any orthant, however, it was established in~\cite{sootla2015koopman} that it is \emph{eventually monotone}. This means that the flow satisfies the monotonicity property after some initial transient.
\begin{figure}[t]
\centering
  \includegraphics[width = 0.9\columnwidth]{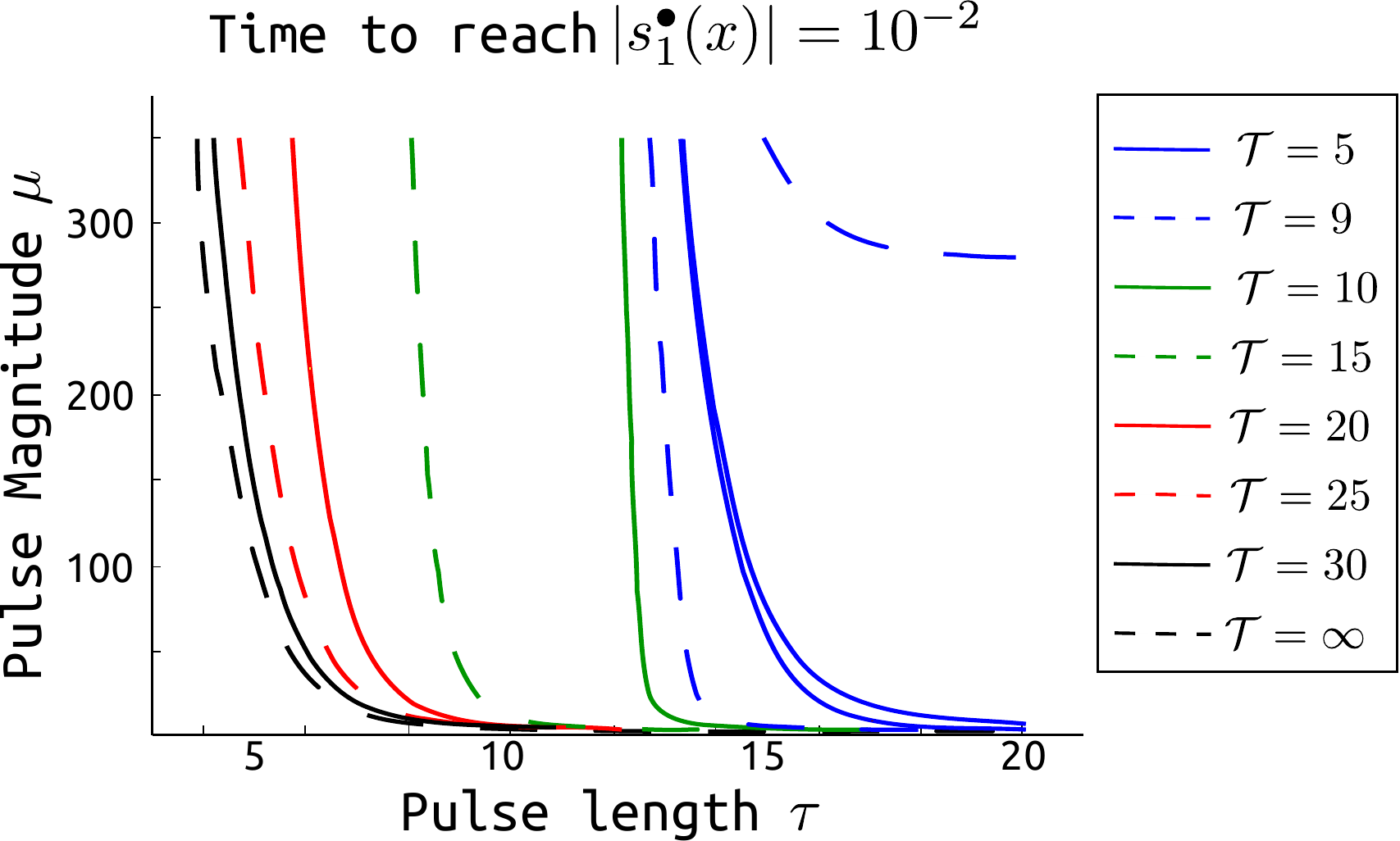}
\caption{The level sets of $\cT = \frac{1}{|\lambda_1|} \ln \left(\frac{|r(\mu,\tau)|}{\varepsilon} \right)$, where $\varepsilon = 10^{-2}$. The pairs $(\mu,\tau)$ on the same curve are related to trajectories which converge synchronously to the stable equilibrium.} \label{fig:switching-iso}
\end{figure}
\begin{figure}[t]\centering
  \includegraphics[width = 0.9\columnwidth]{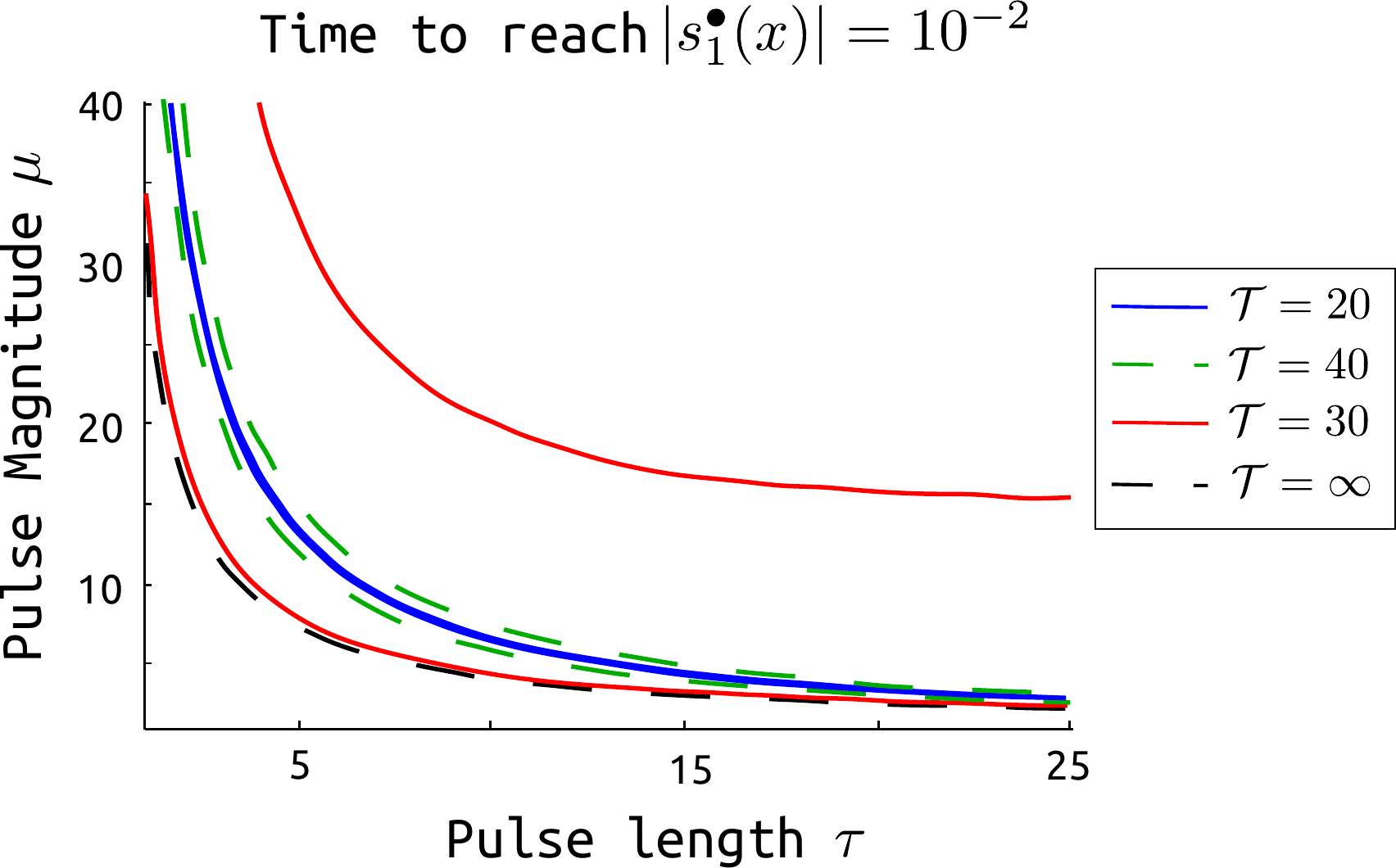}
\caption{The level sets of $\cT = \frac{1}{|\lambda_1|} \ln \left(\frac{|r(\mu,\tau)|}{\varepsilon} \right)$ for the toxin-antitoxin system, where $\varepsilon = 10^{-2}$.}\label{fig:tat-ss}
\end{figure}

We depict the level sets $\partial \cS^\alpha$ in Figure~\ref{fig:tat-ss}, where it appears that these sets are monotone curves although the system does not satisfy the assumptions of Theorem~\ref{thm:mon-switch-iso-mon-sys}. This could be explained by the property of \emph{eventual monotonicity}, but we have not further investigated this case. 

\emph{Lorenz System.}
Now we illustrate the level sets $\partial \cS^\alpha$ in the case where $r$ is complex-valued. Consider the Lorenz system 
\begin{align*}
\dot x_1 &= \sigma (x_2 - x_1)  + u\\
\dot x_2 &= x_1 (\rho - x_3) - x_2 + u\\
\dot x_3 &= x_1 x_2 -\beta x_3
\end{align*}
with parameters $\sigma = 10$, $\rho = 2$, $\beta = 8/3$, which is bistable but not monotone. Note that the Jacobian matrix at the steady states has two complex conjugate dominant eigenvalues. In this case, Theorem~\ref{thm:mon-switch-iso-mon-sys} cannot be applied and Figure~\ref{fig:lorenz-ss} shows that the level sets $\partial \cS^\alpha$ are not monotone. It is however noticeable that the lower part of the switching separatrix (black curve) seems to be monotone (but the upper part is not monotone). 

\begin{figure}[t]\centering
  \includegraphics[width = 0.9\columnwidth]{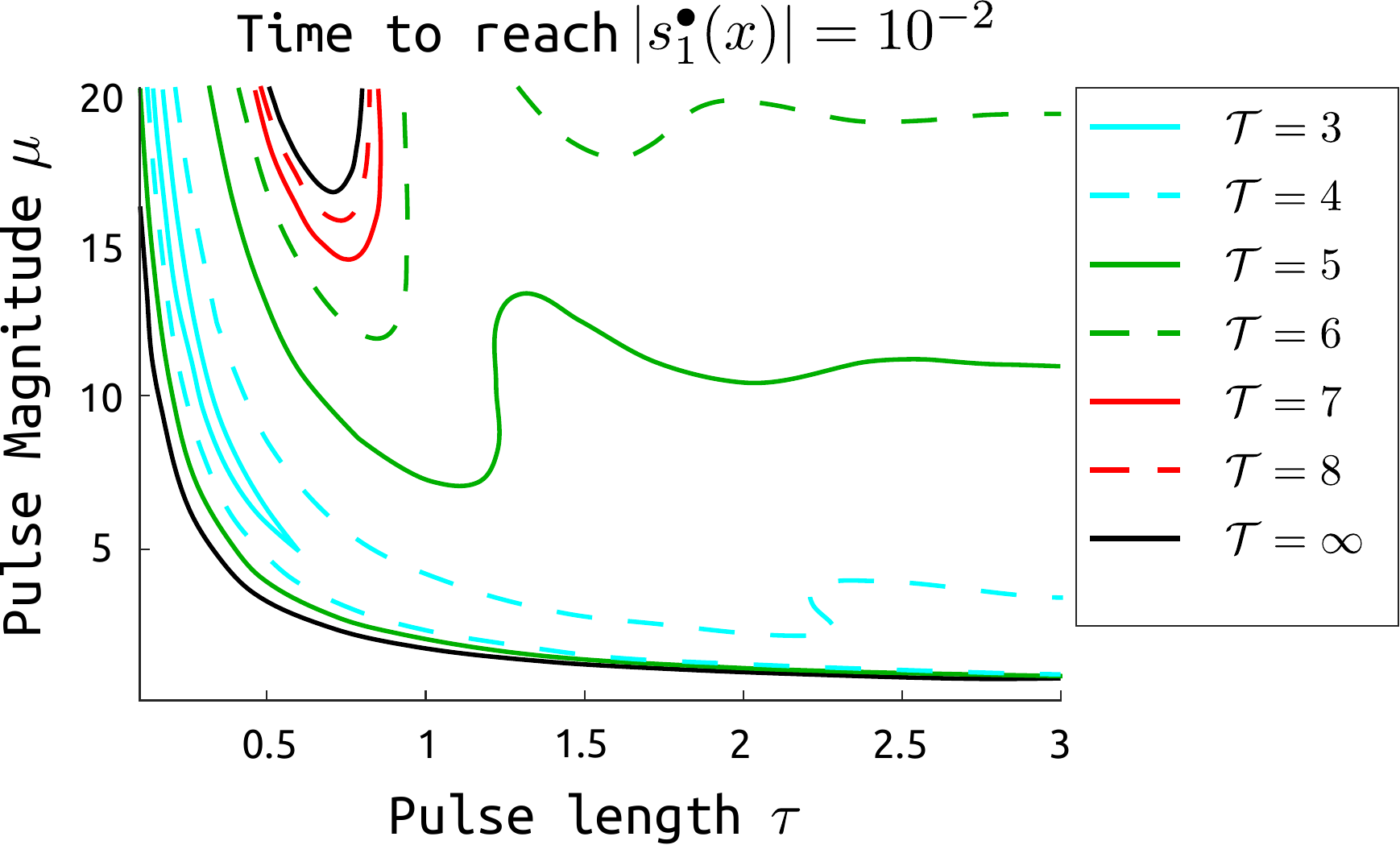}
\caption{The level sets of $\cT = \frac{1}{|\lambda_1|} \ln \left(\frac{|r(\mu,\tau)|}{\varepsilon} \right)$ for the Lorenz system, where $\varepsilon = 10^{-2}$. } \label{fig:lorenz-ss}
\end{figure}

\section{Conclusion}

In this paper, we have further developed a recent study on the problem of switching a bistable system between its steady states with temporal pulses. We have introduced a family of curves in the control parameter space, denoted as $\partial \cS^\alpha$, which provide an information on the time needed by the system to converge to the steady state. The sets $\partial \cS^\alpha$ can be viewed as an extension of the switching separatrix defined in the previous study. They are related to the dominant eigenfunction of the Koopman operator, a property that provides a method to compute them. In the case of monotone systems, we have also shown that the level sets $\partial \cS^\alpha$ are characterized by strong (monotonicity) properties.

Future research will investigate the topological properties of the level sets $\partial \cS^\alpha$ such as connectedness. Moreover, characterizing the properties of the level sets $\partial \cS^\alpha$ (and the switching separatrix) in the case of non-monotone (e.g. eventually monotone) systems is still an open question.

\bibliography{bibl_koopman}
\appendix
\section{Proof of Theorem~\ref{thm:mon-switch-iso-mon-sys}} \label{app:proof}
Before we proceed with the proof of Theorem~\ref{thm:mon-switch-iso-mon-sys}, we show a technical result, which establishes that for monotone systems the transient during switching between operating points is always an increasing function.
\begin{prop}\label{prop:pos-der}
Let the system~\eqref{sys:f} be monotone on $\cD\times \cU_\infty$, then
\begin{gather} \label{cond-pos-der}
\phi(\xi + h, x^\ast, \mu) \succeq \phi(\xi, x^\ast, \mu)
\end{gather}
for any nonnegative scalars $h$, $\xi$, $\mu$.
\end{prop}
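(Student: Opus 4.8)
The plan is to reduce the claim to two elementary facts and then combine them through the semigroup property of the flow under a constant input. Throughout, $\succeq$ denotes the standard order $\succeq_{\Rnn^n}$, and since $\cU \subseteq \R$ the input cone is $\Rnn$, so that a constant signal $\mu \ge 0$ satisfies $\mu \succeq 0$ as an element of $\cU_\infty$.

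First I would show that the trajectory departs from $x^\ast$ in the increasing direction, namely
\[
\phi(h, x^\ast, \mu) \succeq x^\ast \quad \text{for all } h \ge 0.
\]
This follows directly from monotonicity in the input (Definition~\ref{def:mon}): taking the common initial condition $x^\ast$, the zero signal $u \equiv 0$, and the constant signal $v \equiv \mu \succeq 0$, monotonicity gives $\phi(h, x^\ast, 0) \preceq \phi(h, x^\ast, \mu)$. Since $x^\ast$ is a fixed point of the unforced system (Assumption~A2), we have $\phi(h, x^\ast, 0) = x^\ast$, and the displayed inequality follows.

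The second step uses that under the constant input $\mu$ the system $\dot x = f(x,\mu)$ is autonomous, so its flow obeys the semigroup (cocycle) identity
\[
\phi(\xi + h, x^\ast, \mu) = \phi\bigl(\xi, \phi(h, x^\ast, \mu), \mu\bigr).
\]
Monotonicity of this flow in its initial condition (Definition~\ref{def:mon} with the common input $\mu$) then turns $\phi(h, x^\ast, \mu) \succeq x^\ast$ from the first step into
\[
\phi\bigl(\xi, \phi(h, x^\ast, \mu), \mu\bigr) \succeq \phi(\xi, x^\ast, \mu),
\]
which is exactly the asserted inequality~\eqref{cond-pos-der}.

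The argument is short, and the only points requiring care are notational: confirming that the constant-pulse input yields an autonomous system so that the composition identity applies, and aligning the order conventions so that $\mu \ge 0$ reads as $\mu \succeq 0$ in $\cU_\infty$ while the conclusion is taken with respect to $\Rnn^n$. I do not expect a genuine obstacle; the whole substance is contained in the first step, where the nonnegativity of $\mu$ together with $f(x^\ast, 0) = 0$ forces the forced trajectory to dominate the stationary unforced one.
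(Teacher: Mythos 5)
Your proof is correct and coincides with the paper's own argument: both establish $\phi(h, x^\ast, \mu) \succeq \phi(h, x^\ast, 0) = x^\ast$ via monotonicity in the input together with the fixed-point property, and then apply the semigroup identity $\phi(\xi+h, x^\ast, \mu) = \phi(\xi, \phi(h, x^\ast, \mu), \mu)$ with monotonicity in the initial condition. Your added care about the constant input making the system autonomous (so the semigroup identity applies) is a point the paper leaves implicit, but the substance is identical.
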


\begin{proof} The proof stems from a well-know result in monotone systems theory, which states that the flow cannot increase (or decrease) on two disjoint time intervals. We show this result for completeness. Due to monotonicity we have 
\begin{gather}
\phi(h, x^\ast, \mu) \succeq \phi(h, x^\ast, 0) = x^\ast
\end{gather} 
for all nonnegative $h$, $\mu$. Therefore, the semigroup property of the dynamical systems implies
\begin{multline}
\phi(\xi+h, x^\ast, \mu) = \phi(\xi, \phi(h, x^\ast, \mu),\mu) \succeq \phi(\xi,x^\ast,\mu),
\end{multline}
for any nonnegative scalars $\xi$, $h$, $\mu$. 
\end{proof}

\begin{proof-of}{Theorem~\ref{thm:mon-switch-iso-mon-sys}}
(i) Let $(\mu_1, \tau_1)$, $(\mu_2, \tau_2)$ belong to $\cS^\alpha$ for some finite $\alpha>0$ and $\mu_2 \ge \mu_1$, $\tau_2 \ge \tau_1$. Then $r(\mu_1, \tau_1)$, $r(\mu_2, \tau_2)$ are finite. Due to monotonicity and Proposition~\ref{prop:pos-der}, we have that 
\begin{multline*}
\phi(\tau_1, x^\ast, \mu_1 h(\cdot, \tau_1)) \preceq 
\phi(\tau_1, x^\ast, \mu_2 h(\cdot, \tau_2)) \preceq \\
\phi(\tau_2, x^\ast, \mu_2 h(\cdot, \tau_2)),
\end{multline*}
which according to Proposition~\ref{prop:mon-eig-fun} implies that 
\begin{gather*}
s_1^\bullet(\phi(\tau_1, x^\ast, \mu_1 h(\cdot, \tau_1))) \le s^\bullet_1(\phi(\tau_2, x^\ast, \mu_2 h(\cdot, \tau_2))).
\end{gather*}
Using this property it is rather straightforward to show that $\cS^\alpha$ is order-convex.

(ii) The sets $\partial_+ \cS^\alpha$ and $\partial_- \cS^\alpha$ contain the maximal and minimal elements, respectively, of the order-convex set $\cS^\alpha$. Assume that $(\mu_1,\tau_1),(\mu_2,\tau_2) \in \partial_+ \cS^\alpha$ (or $(\mu_1,\tau_1),(\mu_2,\tau_2) \in \partial_- \cS^\alpha$). Then we cannot have $(\mu_1,\tau_1) \ll (\mu_2,\tau_2)$. Hence, if $\tau_1 < \tau_2$, we must have $\mu_1 \geq \mu_2$ and if $\mu_1 < \mu_2$, we must have $\tau_1 \geq \tau_2$. We prove the second part of the statement by contradiction. Let $\tau_1 < \tau_2$, $\mu_1 \le \mu_2$  and let $s_1^\bullet(\phi(\tau_1, x^\ast,\mu_1 h(\cdot,\tau_1))) = -\alpha$ and $s^\bullet_1(\phi(\tau_2, x^\ast,\mu_2 h(\cdot,\tau_2))) = -\alpha$, where $\alpha> 0$. Due to monotonicity we have that 
\[
\phi(t, x^\ast,\mu_1 h(\cdot,\tau_1))  \preceq \phi(t, x^\ast,\mu_2 h(\cdot,\tau_2)) \,\, \forall t \ge 0,
\]
which according to Proposition~\ref{prop:mon-eig-fun} entails
\begin{gather}
\label{cond:stab}
	s_1^\bullet(\phi(t, x^\ast,\mu_1 h(\cdot,\tau_1))) \le s^\bullet_1(\phi(t, x^\ast,\mu_2 h(\cdot,\tau_2))).
\end{gather}

The flow $\phi(t, x^\ast,\mu_1 h(\cdot,\tau_1))$ converges to $x^\bullet$ freely for all $t> \tau_1$, since $h(t,\tau_1) = 0$ for all $t> \tau_1$. Negativity of $-\alpha$ implies that $s_1^\bullet(\cdot)$ is growing along the trajectory $\phi(\tau_1, x^\ast,\mu_1 h(\cdot,\tau_1))$ and 
hence $s_1^\bullet(\phi(\tau_2, x^\ast, \mu_1 h(\cdot,\tau_1))) > -\alpha$. This, however, contradicts~\eqref{cond:stab}, since 
\[
s_1^\bullet(\phi(\tau_2, x^\ast,\mu_1 h(\cdot,\tau_1))) \le s_1^\bullet(\phi(\tau_2, x^\ast,\mu_2 h(\cdot,\tau_2))) = -\alpha.
\]

(iii) Let  $r(\mu_1, \tau_1)= r(\mu_2, \tau_2) = \alpha$ and pick  $\tau_1 < \tau_2$. Assume that $\mu_1 \le \mu_2$. We have that
\begin{multline*}
\phi(\tau_1, x^\ast,\mu_1 h(\cdot,\tau_1)) \preceq \phi(\tau_1, x^\ast,\mu_2 h(\cdot,\tau_1)) \preceq \\
\phi(\tau_2, x^\ast,\mu_2 h(\cdot,\tau_1))
\end{multline*}
where the first inequality follows from monotonicity and the second follows from Proposition~\ref{prop:pos-der}. Due to the condition $\phi(t , x, \mu) \gg \phi(t , x, \nu)$ for all $x$, all $\mu > \nu$ and all $t>0$ in the premise, we have
\begin{align*}
\phi(\tau_1, x^\ast,\mu_1 h(\cdot,\tau_1)) &\preceq \phi(\tau_2, x^\ast,\mu_2 h(\cdot,\tau_1)) \\
&=\phi(\tau_2 - \tau_1, \phi(\tau_1, x^\ast, \mu_2 h(\cdot,\tau_1)), 0)  \\
&\ll \phi(\tau_2 - \tau_1, \phi(\tau_1, x^\ast, \mu_2 h(\cdot,\tau_1)), \mu_2 ) \\
&= \phi(\tau_2, x^\ast,\mu_2 h(\cdot,\tau_2))
\end{align*}
or equivalently $r(\mu_1,\tau_1) < r(\mu_2,\tau_2)$. We arrive at a contradiction, hence $\mu_1 > \mu_2$, which implies that $\partial_+ \cS^\alpha$ is a graph of a decreasing function.
\end{proof-of}

\end{document}